\newcommand{\mucalc}[1]{\ensuremath{\mathcal{L}_\mu^{#1}}}
\newcommand{\Mudiam}[2]{\ensuremath{\langle #1 \rangle_{#2}}}
\newcommand{\Mubox}[2]{\ensuremath{[ #1 ]_{#2}}}
\newcommand{\Mudiamanon}[1]{\ensuremath{\Diamond_{#1}}}
\newcommand{\Muboxanon}[1]{\ensuremath{\Box_{#1}}}
\newcommand{\repl}[1]{\ensuremath{\{#1\}}}
\newcommand{\sem}[3]{\ensuremath{[\![#1]\!]_{#2}^{#3}}}
\newcommand{\fp}[1]{\ensuremath{\mathit{fp}_{#1}}}
\newcommand{\ad}[2]{\ensuremath{\mathit{ad}_{#1}(#2)}}
\newcommand{\ar}[1]{\ensuremath{\mathit{ar}(#1)}}
\newcommand{\sub}[1]{\ensuremath{\mathit{Sub}(#1)}}
\newcommand{\verifier}{\textsc{Verifier}\xspace}
\newcommand{\refuter}{\textsc{Refuter}\xspace}
\newcommand{\Nat}{\ensuremath{\mathbb{N}}}
\def\newarrow#1{\mathop{{\hbox{\setbox0=\hbox{$\scriptstyle{#1\quad}$}{$%
\mathrel{\mathop{\setbox1=\hbox to
\wd0{\rightarrowfill}\ht1=3pt\dp1=-2pt\box1}\limits^{#1}}%
$}}}}}
\newcommand{\Transition}[3]{\ensuremath{#1 \newarrow{#2} #3}}
\newcommand{\Prop}{\ensuremath{\mathsf{Prop}}}
\newcommand{\Act}{\ensuremath{\mathsf{Act}}}
\newcommand{\Var}{\ensuremath{\mathsf{Var}}}
\newcommand{\lts}{\ensuremath{\mathcal{T}}}
\newcommand{\propAnd}{\ensuremath{p^{\wedge}}}
\newcommand{\propOr}{\ensuremath{p^{\vee}}}
\newcommand{\propDiam}[1]{\ensuremath{p^{\Diamond}_{#1}}}
\newcommand{\propBox}[1]{\ensuremath{p^{\Box}_{#1}}}
\newcommand{\propDiamNo}{\ensuremath{p^{\Diamond}}}
\newcommand{\propBoxNo}{\ensuremath{p^{\Box}}}
\newcommand{\propFP}[1]{\ensuremath{p^{\mathsf{FP}}_{#1}}}
\newcommand{\propFPNo}{\ensuremath{p^{\mathsf{FP}}}}
\newcommand{\propRepl}[1]{\ensuremath{p^{\mathsf{rp}}_{#1}}}
\newcommand{\propReplNo}{\ensuremath{p^{\mathsf{rp}}}}
\newcommand{\propSwapNo}{\ensuremath{p^{\mathsf{sw}}}}
\newcommand{\propStop}{\ensuremath{p^{\bullet}}}
\newcommand{\propPLit}[2]{\ensuremath{p^+_{#1,#2}}}
\newcommand{\propNLit}[2]{\ensuremath{p^-_{#1,#2}}}
\newcommand{\propPLitNo}{\ensuremath{p^+}}
\newcommand{\propNLitNo}{\ensuremath{p^-}}
\theoremstyle{plain}
\newtheorem{theorem}{Theorem}
\newtheorem{lemma}[theorem]{Lemma}
\newtheorem{proposition}[theorem]{Proposition}
\theoremstyle{definition}
\newtheorem{remark}{Remark}
\title{The Arity Hierarchy in the Polyadic $\mu$-Calculus{\footnote{The European Research Council has provided financial support under the European Community's Seventh Framework Programme (FP7/2007-2013) / ERC grant agreement no 259267.}}}
\author{Martin Lange  
\institute{School of Electrical Engineering and Computer Science, University of Kassel, Germany}
}
\begin{document}
\maketitle

\begin{abstract}
The polyadic $\mu$-calculus is a modal fixpoint logic whose formulas define relations
of nodes rather than just sets in labelled transition systems. It can express
exactly the polynomial-time computable and bisimulation-invariant queries on
finite graphs. In this paper we show a hierarchy result with respect to expressive 
power inside the polyadic $\mu$-calculus: for every level of fixpoint alternation, greater
arity of relations gives rise to higher expressive power. The proof uses a diagonalisation 
argument. 
\end{abstract}

\addtocounter{footnote}{-1}

\section{Introduction}

The modal $\mu$-calculus $\mucalc{}$ is a well-studied logic \cite{ICALP::Kozen1982,BraStir01,BraStir04}, obtained by adding restricted second-order
quantification in the form of least and greatest fixpoints to a multi-modal logic interpreted over labelled transition systems. A formula of the
modal $\mu$-calculus is thus interpreted in a state of such transition systems which means that such formulas \emph{define} sets of states in 
transition systems. For example, $\nu X.\mu Y.\Mudiam{a}{}X \vee \Mudiam{b}{}Y$ defines the set of all states from which there is a path with labels
`$a$' and `$b$' that contains infinitely many occurrences of the symbol `$a$'. 

The polyadic $\mu$-calculus \mucalc{\omega} is a much less known extension of the modal $\mu$-calculus whose formulas define \emph{relations} rather than
sets of states. They are interpreted in a tuple of states rather than a single state, and there are modal operators for each position in this
tuple. Thus, one states ``the third state has an `$a$'-successors'' for instance rather than just ``there is an `$a$'-successors.'' Combining such simple
modal statements with fixpoint quantifiers yields an expressive logic with interesting applications: the polyadic $\mu$-calculus was first defined by
Andersen \cite{AndersenPMC:1994} and used as a logic for defining process equivalences like bisimilarity \cite{LL-FICS12,LLVG:TCS:2014}. Later it was 
re-invented by Otto under the name
\emph{Higher-Dimensional $\mu$-Calculus} \cite{Otto/99b} and shown to capture the complexity class P over bisimulation-invariant class of finite
graphs. I.e.\ a bisimulation-invariant property of finite graphs can be computed in polynomial time iff it is definable in \mucalc{\omega}.

There is a natural hierarchy in \mucalc{\omega} given by fragments of bounded arity. The polyadic $\mu$-calculus itself can be seen as a fragment of
FO+LFP, i.e.\ First-Order Logic extended with fixpoint quantifiers. The translation naturally extends the standard translation of modal logic into
first-order formulas with one free variable, seen as the point of reference for the interpretation of the property expressed by the modal formula.
Polyadic formulas get interpreted in tuples of states, hence they can be seen as special first-order formulas with several free variables. The 
arity of a polyadic formula is then the minimal number of free variables needed to express this property in FO+LFP or, equivalently, the length
of the tuples used to interpret the formula. 

The aim of this article is to show that the hierarchy formed by fragments of bounded arity, denoted \mucalc{1}, \mucalc{2}, \ldots is strict. This is 
not too surprising when taken literally: clearly, \emph{any} satisfiable but non-valid formula in \mucalc{k{+}1} is not equivalent to any formula
in \mucalc{k} since the former get interpreted in $k{+}1$-tuples and the latter only in $k$-tuples. We therefore need to employ a convention that
allows different fragments to be compared with respect to expressive power and still yields a meaningful hierarchy result. We consider formulas
that are interpreted in a single state at the top-level, regardless of their arity. I.e.\ we show that for every $k\ge1$ there is a 
\mucalc{k{+}1}-formula $\Phi_{k{+}1}$ such that there is no \mucalc{k}-formula $\psi$ which yields
\begin{displaymath}
\mathcal{T},(\underbrace{s,\ldots,s}_{k \text{\scriptsize\ times}}) \models \psi \quad \text{iff} \quad
\mathcal{T},(\underbrace{s,\ldots,s}_{k+1 \text{\scriptsize\ times}}) \models \Phi_{k+1}
\end{displaymath}
for all labelled transition systems $\mathcal{T}$ and all their states $s$.
 
Arity hierarchies have been studied before, most notably by Grohe for fixpoint extensions of first-order logic including FO+LFP \cite{APAL::Grohe1996}.
Even though each \mucalc{k} can be embedded into FO+LFP, the arity hierarchy in \mucalc{\omega} does not follow immediately from the one in
FO+LFP. Grohe constructs formulas of arity $k+1$ in FO+LFP -- they belong to the smaller FO+TC already -- and shows that they are not equivalent to any
formulas of arity $k$ in FO+LFP -- not even the much larger FO+sPFP. However, these witnessing formulas are not bisimulation-invariant since they
express a relation formed by the transitive closure of a clique relation and being a clique is clearly not bisimulation-invariant. Hence, these
witnessing formulas are not expressible in \mucalc{\omega} and therefore the arity hierarchy is not transferred immediately.

It could of course be checked whether the proof used to show the arity hierarchy in FO+LFP could be adapted to work for \mucalc{\omega} as well. It
would require the search for a similar witnessing property and the adaption of the Ehrenfeucht-Fra\"{\i}ss\'e argument to the polyadic $\mu$-calculus.
Such model comparison games exist for the modal $\mu$-calculus \cite{1996:tacas:stirling} but using them to obtain inexpressibility results has proved
to be quite difficult.

Instead we use a simple diagonalisation argument in order to obtain a strictness result regarding arity hierarchies. A $k$-ary formula $\varphi$ 
can be seen syntactically 
as a labelled transition system $\mathcal{T}_\varphi$, roughly based on the syntax-tree representation. We can then define a $k{+}1$-ary formula that 
simulates the evaluation of $\varphi$ on $\mathcal{T}_\varphi$ and accepts those $\mathcal{T}_\varphi$ which are not accepted by $\varphi$ itself. Hence,
we need to find a generic way of dualising the operators in $\varphi$. This is no particular problem, for instance, when one sees a disjunction then
one needs to check \emph{both} disjuncts, for a conjunction one only needs to check one of them. However, fixpoint formulas may hold or not because
of infinite recursive unfoldings through fixpoint operators. This needs to be dualised as well, and the only way that we can see to do this is to
equip the simulating formula with a fixpoint structure that is at least as rich as the one of the simulated formula. Consequently, we obtain an
arity hierarchy relative to the alternation hierarchy. This does not happen for extensions of First-Order Logic since it is known that there is
no alternation hierarchy: every FO+LFP formula can be expressed with a single least fixpoint operator only \cite{Imm:relqcp,STOC::Vardi1982}. The 
situation for modal logics is different: more fixpoint alternation generally gives higher expressive power, at least so in the modal $\mu$-calculus
\cite{CONCUR::Bradfield1996}, and presumably then so in \mucalc{\omega} as well.

The rest of this paper is organised as follows. In Section~\ref{sec:mucalc} we recall the polyadic $\mu$-calculus and necessary tools like fixpoint
alternation and model checking games. In Section~\ref{sec:hierarchy} we prove the hierarchy results, and in Section~\ref{sec:concl} we conclude with
a discussion on further work.


\section{The Polyadic $\mu$-Calculus}
\label{sec:mucalc}

\paragraph*{Labelled Transition Systems.}
Let $\Prop = \{p,q,\ldots\}$ and $\Act = \{a,b,\ldots\}$ be two fixed, countably infinite sets of atomic propositions and
action names. A labeled transition system (LTS) over $\Prop$ and $\Act$ is a tuple $\lts=(S,\Transition{}{}{},\lambda,s_I)$
where $S$ is a set of states, $\Transition{}{}{} \subseteq S \times \Act \times S$ is the transition relation, 
$\lambda: S \to 2^{\Prop}$ labels the states with atomic propositions, and $s_I$ is some designated starting state. 
We will write $\Transition{s}{a}{t}$ instead of $(s,a,t) \in \Transition{}{}{}$.

\paragraph*{The Syntax of \mucalc{\omega}.}
Let $\Var = \{X,Y,\ldots\}$ be an infinite set of second-order variables. The syntax of the polyadic modal $\mu$-calculus 
$\mucalc{\omega}$ is similar to that of the ordinary modal $\mu$-calculus. However, modalities and propositions
are relativised to a natural number pointing at a position in a tuple of states used to interpret the formula.

A \emph{replacement} is a $\kappa: \Nat \to \Nat$ which acts like the identity function on almost all arguments. We write
$\Nat \dashrightarrow \Nat$ to denote the space of all replacements. Such 
a replacement is then written as $\repl{\kappa(i_1) \leftarrow i_1,\ldots,\kappa(i_m) \leftarrow i_m}$ when $i_1 < \ldots < i_m$
are all those indices for which we have $\kappa(i_j) \ne i_j$. We will sometimes allow ourselves to deviate from this and to use
some shorter but equally intuitive notation for such functions. For instance $\repl{1 \leftrightarrow 2}$ should denote the swap between $1$ and
$2$, i.e.\ it abbreviates $\repl{2 \leftarrow 1, 1 \leftarrow 2}$.
 
For technical convenience, we define the logic directly in positive normal form. Formulas are then given by the grammar
\begin{displaymath}
\varphi \enspace ::= \enspace p(i) \mid \neg p(i) \mid X \mid \varphi \vee \varphi \mid \varphi \wedge \varphi \mid
\Mudiam{a}{i}\varphi \mid \Mubox{a}{i}\varphi \mid \mu X.\varphi \mid \nu X.\varphi \mid \kappa\varphi
\end{displaymath}
where $p \in \Prop$, $a \in \Act$, $1 \ge i \in \Nat$ and $\kappa$ is a replacement. We require that every second-order variable 
gets bound by a unique fixpoint quantifier $\mu$ or $\nu$. Then for every formula $\varphi$ there is a function 
$\mathit{fp}_\varphi$ which maps each second-order variable $X$ occurring in $\varphi$ to its unique binding formula 
$\mathit{fp}_\varphi(X)=\eta X.\psi$. 

The set $\sub{\varphi}$ of subformulas of $\varphi$ is defined as usual, with $\sub{\mu X.\varphi} = \{\mu X.\varphi\} \cup \sub{\varphi}$
for instance.

Later we will use the abbreviation $\ell \to \varphi$ when $\ell$ is a literal $q(i)$ or $\neg q(i)$. This behaves like ordinary implication
-- note that we have defined the logic in positive normal form and can therefore not simply introduce implication via negation -- for
such formulas when seen as $\bar{\ell} \vee \varphi$ where $\bar{ell}$ is the usual complementary literal to $\ell$.

The \emph{arity} of a formula $\varphi$, denoted $\ar{\varphi}$ is the largest index $i$ occurring in the operators $p(i)$, $\Mudiam{a}{i}$,
$\Mubox{a}{i}$ and $\repl{\kappa}$ in any of its subformulas. The fragment of arity $k$ is 
$\mucalc{k} := \{ \varphi \mid \ar{\varphi} \le k \}$. Hence, $\varphi := \nu X.\Mudiam{a}{1}\repl{2 \leftrightarrow 1}X$ has arity $2$
and it therefore belongs to all fragments $\mucalc{2}$, $\mucalc{3}$, etc., because it defines a relation of arity $2$ which can also
be seen as a relation of higher arity in which the 3rd, 4th, etc.\ components of its tuples are simply unrestrained.

\paragraph*{The Semantics of \mucalc{\omega}.}
Formulas of $\mucalc{k}$ are interpreted in $k$-tuples of states of a transition system $\lts = (S,\Transition{}{}{},\lambda,s_I)$.
An interpretation $\rho: \Var \to 2^{S^k}$ is neede in order to define this inductively and give a meaning to formulas with free variables.
For each $\mucalc{k}$-formula $\varphi$, $\sem{\varphi}{\rho}{\lts}$ is a $k$-ary relation of states in $\lts$, namely the
relation defined by $\varphi$ under the assumption that its free variables are interpreted by $\rho$.
\begin{align*}
\sem{p(i)}{\rho}{\lts} \enspace &:= \enspace \{ (s_1,\ldots,s_k) \mid p \in \lambda(s_i) \} \\ 
\sem{\neg p(i)}{\rho}{\lts} \enspace &:= \enspace \{ (s_1,\ldots,s_k) \mid p \not\in \lambda(s_i) \} \\ 
\sem{X}{\rho}{\lts} \enspace &:= \enspace \rho(X) \\ 
\sem{\varphi \vee \psi}{\rho}{\lts} \enspace &:= \enspace \sem{\varphi}{\rho}{\lts} \cup \sem{\psi}{\rho}{\lts} \\ 
\sem{\varphi \wedge \psi}{\rho}{\lts} \enspace &:= \enspace \sem{\varphi}{\rho}{\lts} \cap \sem{\psi}{\rho}{\lts} \\ 
\sem{\Mudiam{a}{i}\varphi}{\rho}{\lts} \enspace &:= \enspace \{ (s_1,\ldots,s_k) \mid 
  \exists t \text{ s.t. } \Transition{s_i}{a}{t} \text{ and } (s_1,\ldots,s_{i-1},t,s_{i+1},\ldots,s_k) \in \sem{\varphi}{\rho}{\lts} \} \\ 
\sem{\Mubox{a}{i}\varphi}{\rho}{\lts} \enspace &:= \enspace \{ (s_1,\ldots,s_k) \mid 
  \forall t: \text{ if } \Transition{s_i}{a}{t} \text{ then } (s_1,\ldots,s_{i-1},t,s_{i+1},\ldots,s_k) \in \sem{\varphi}{\rho}{\lts} \} \\ 
\sem{\mu X.\varphi}{\rho}{\lts} \enspace &:= \enspace \bigcap \{ R \subseteq S^k \mid \sem{\varphi}{\rho[X \mapsto R]}{\lts} \subseteq R \} \\ 
\sem{\nu X.\varphi}{\rho}{\lts} \enspace &:= \enspace \bigcup \{ R \subseteq S^k \mid \sem{\varphi}{\rho[X \mapsto R]}{\lts} \supseteq R \} \\
\sem{\kappa\varphi}{\rho}{\lts} \enspace &:= \enspace \{ (s_{\kappa(1)},\ldots,s_{\kappa(k)}) \mid (s_1,\ldots,s_k) \in \sem{\varphi}{\rho}{\lts} \}
\end{align*}
Note that the partial order $\subseteq$ makes $S^k$ a complete lattice with meets and joins given by $\bigcap$ and $\bigcup$, and the 
semantics of fixpoint formulas is then well-defined according to the Knaster-Tarski Theorem \cite{Kna28,Tars55}. 

We also write $\lts, s_1,\ldots,s_k \models_\rho \varphi$ instead of $(s_1,\ldots,s_k) \in \sem{\varphi}{\rho}{\lts}$. If $\varphi$ has
no free second-order variables then we also drop $\rho$. In Section~\ref{sec:hierarchy} we will often consider situations with tuples of
the form $(s,\ldots,s)$ of some length $k$ derivable from the context. We will then simply write $\lts, s \models \varphi$ as a short form
for $\lts, s, \ldots, s \models \varphi$.

Two formulas $\varphi,\psi \in \mucalc{k}$ are \emph{equivalent}, written $\varphi \equiv \psi$, if $\sem{\varphi}{\rho}{\lts} = \sem{\psi}{\rho}{\lts}$
for any $\lts$ and corresponding variable interpretation $\rho$. Note that two formulas can be equivalent even if they do not belong to the same
arity fragment: if $\varphi \in \mucalc{k}$ and $\psi \in \mucalc{k'}$ and $k \ne k'$ then $\varphi, \psi \in \mucalc{\max \{k,k'\}}$, i.e.\ we
can interpret the one of smaller arity as a formula of larger arity that simply does not constrain the additional elements in the tuples of the
relation it defines.

\paragraph*{Examples.}
The standard example of a \mucalc{\omega}-formula, indeed a \mucalc{2}-formula, is the one defining \emph{bisimilarity}.
\begin{displaymath}
\varphi_\sim \enspace := \enspace \nu X.\big(\bigwedge\limits_{p \in \Prop} p(1) \to p(2)\big) \wedge 
(\bigwedge\limits_{a \in \Act} \Mubox{a}{1}\Mudiam{a}{2}X\big) \wedge \repl{1 \leftrightarrow 2}X
\end{displaymath} 
It is indeed the case that $\lts, s,t \models \varphi_\sim$ iff $s \sim t$, i.e.\ $s$ and $t$ are bisimilar in $\lts$.

As a second example consider an \lts with an edge relation $\mathsf{flight}$ and two atomic propositions $\mathsf{warm}$ and $\mathsf{safe}$. 
When seeing the nodes of the LTS as cities (which can or cannot be warm and/or safe and are potentially linked by direct flight connections),
then 
\begin{displaymath}
\repl{3 \leftarrow 1}\varphi_\sim \wedge \Mudiam{\mathsf{flight}}{2} \mu X. \mathsf{warm}(2) \wedge \mathsf{safe}(2) \wedge 
\Mudiam{\mathsf{flight}}{1}\varphi_\sim \wedge (\repl{3 \leftarrow 1}\varphi_\sim \vee \Mubox{\mathsf{flight}}{2}X) \wedge \repl{2 \leftarrow 3}X
\end{displaymath} 
yields all triples $(s,t,u)$ of cities such that there is a roundtrip from $t$ which only traverses through warm and safe cities that can be reached
from city $s$ in one step -- in case someone in $s$ wants to come and visit -- such that the trip can be traversed in both directions. This 
description of course uses equality (``roundtrip'') on cities
which is not available in the logic. Instead we use bisimilarity in the formula, so for instance ``roundtrip from $t$'' is to be understood as
a trip starting in $t$ and ending in a city that cannot be distinguished from $t$ with the means of bisimilarity.

\paragraph*{Fixpoint Alternation.} The proof of the arity hierarchy carried out in Section~\ref{sec:hierarchy} needs a closer look at the dependencies
of fixpoints inside a formula. This phenomenon is well-understood leading to the notion of alternation hierarchy \cite{EL86,LICS::Niwinski1988}. 
We give a brief intoduction to fixpoint alternation that is sufficient for the purposes of the next section.

Let $k \ge 1$ and $\varphi \in \mucalc{k}$ be fixed. For two variables $X,Y \in \sub{\varphi}$ we write $X \ge_\varphi Y$ if $X$ has a free occurrence
in $\fp{\varphi}(Y)$. We use $>_\varphi$ to denote the strict part of its transitive closure. E.g.\ in 
\begin{displaymath}
\varphi := \mu X.p(2) \vee \Mudiam{b}{1}(\nu Y.q(1) \wedge \nu Y'.(\mu Z.Y' \vee \Mudiam{a}{1}Z) \wedge \Mubox{b}{2}Y)
\end{displaymath}
we have $X >_\varphi Y >_\varphi Y' >_\varphi Z$ even though there is no free occurrence of $X$ in the fixpoint formula for $Z$. 

Names of variables do not matter but their fixpoint types do. So we abstract this chain of fixpoint dependencies into a chain 
$\mu >_\varphi \nu >_\varphi \nu >_\varphi \mu$. The \emph{alternation type} of a formula is a maximal descending chain of variables 
(represented by their fixpoint types) such that adjacent types in this chain are different. The alternation type of $\varphi$ above is
therefore just $(\mu,\nu,\mu)$. We then define the
\emph{alternation hierarchy} as follows: $\Sigma^k_m$, respectively $\Pi_m^k$ consists of all formulas of arity $k$ and alternation type of 
length at most $m$ such that the $m$-th last in this chain is $\mu$, respectively $\nu$, if it exists. For instance, the formula $\varphi$
above belongs to $\Sigma^2_3$ and thefore also to $\Sigma^2_m$ and $\Pi_m^2$ for all $m > 3$. It does not belong to $\Pi^2_2$.

Each variable $X$ occuring in $\varphi$ is also given an \emph{alternation depth} $\ad{\varphi}{X}$. It is the index in a maximal chain of
dependencies $X_m >_\varphi \ldots >_\varphi X_1$ such that adjacent variables have different fixpoint types. E.g.\ in the example above we have
$\ad{\varphi}{X} = 3$, $\ad{\varphi}{Y} = \ad{\varphi}{Y'} = 2$ and $\ad{\varphi}{Z} = 1$.
 
The next observation is easy to see.

\begin{lemma}
\label{lem:ad}
Let $\varphi \in \Sigma_m^k$ and $X \in \sub{\varphi}$ be one of its fixpoint variables. Then the fixpoint type of $X$ is uniquely determined by $\ad{\varphi}{X}$, namely it is $\mu$ if $m$ and $i$ are both odd or both even, otherwise it is $\nu$.
\end{lemma}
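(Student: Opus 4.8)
The plan is to derive the statement directly from the definitions of alternation depth and of the class $\Sigma_m^k$. The driving idea is that the two fixpoint types are forced to \emph{alternate} along any maximal dependency chain, so that the type of a variable is completely pinned down by its position in such a chain together with the type fixed at the top by the hypothesis $\varphi\in\Sigma_m^k$.

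First I would unfold the definition of $\ad{\varphi}{X}$. Writing $i:=\ad{\varphi}{X}$, there is a maximal chain $X=X_i>_\varphi X_{i-1}>_\varphi\cdots>_\varphi X_1$ along which adjacent variables carry different fixpoint types. Since only the two types $\mu$ and $\nu$ are available, ``adjacent types differ'' means the types along this chain strictly alternate; hence $\mathrm{type}(X_j)$ is determined by $\mathrm{type}(X_i)$ together with the parity of $i-j$, and the whole chain is fixed once we know $\mathrm{type}(X)$ and $i$. Next I would locate the top: let $U$ be a fixpoint variable of maximal alternation depth. It heads a longest alternating chain, which is a representative of the alternation type of $\varphi$, so this maximal depth equals $m$, the length of the alternation type, and because $\varphi\in\Sigma_m^k$ (rather than $\Pi_m^k$) the $m$-th-last element of that chain, namely $\mathrm{type}(U)$, is $\mu$.

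The step I expect to be the crux is establishing that the type is genuinely a \emph{function} of the alternation depth, i.e.\ that two fixpoint variables of equal alternation depth cannot carry opposite types; this is in fact the real content of the phrase ``uniquely determined by $\ad{\varphi}{X}$''. The tool here is a short monotonicity observation: whenever $X\ge_\varphi Y$ with $\mathrm{type}(X)\neq\mathrm{type}(Y)$, the chain witnessing $\ad{\varphi}{Y}$ extends by $X$ into a strictly longer alternating chain, forcing $\ad{\varphi}{X}>\ad{\varphi}{Y}$, whereas a type-preserving dependency step cannot raise the alternation depth. Together with the coherent pinning of the top type by $\Sigma_m^k$, this rules out two parallel chains of equal length but opposite top type and shows that depth alone determines the type. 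The main difficulty is exactly this consistency argument, since it is what excludes the a priori possibility of independent fixpoints of the same depth but different type.

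Once consistency is in hand the conclusion is immediate: consistency forces the maximal chain headed by $U$ to realise every depth $m,m-1,\ldots,1$ exactly once (if some $U_{m-1}$ had depth $m$ it would, by consistency, share the type $\mu$ of the depth-$m$ variables, contradicting the alternation that makes it a $\nu$), so its variables have depths $m,m-1,\dots,1$ with strictly alternating types starting from $\mu$ at the top. Hence a variable of depth $i$ has type $\mu$ iff $i$ and $m$ have the same parity and type $\nu$ otherwise, which is precisely the claimed rule; the base case $i=m$ is the fact that the top is $\mu$, and stepping down by one in depth simultaneously flips the parity and, by alternation, the type.
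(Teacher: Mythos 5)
The paper never proves this lemma --- it is stated as an observation that is ``easy to see'' --- so your attempt has to stand on its own, and it does not: both load-bearing steps are asserted rather than proved, and both fail against the paper's definitions as literally given. First, you claim the maximal alternation depth ``equals $m$''. Membership in $\Sigma_m^k$ only says the alternation type has length \emph{at most} $m$, and the defining condition on the $m$-th-last element applies only ``if it exists''; when the longest alternating chain is shorter than $m$ there is no anchor at all. Second, the consistency step (equal depth forces equal type), which you rightly call the crux: your monotonicity observation constrains only \emph{comparable} variables, i.e.\ pairs related by $>_\varphi$. For fixpoint variables sitting in parallel branches of the formula, unrelated by $>_\varphi$, nothing ties their types to one another, and the ``coherent pinning of the top type by $\Sigma_m^k$'' that you invoke simply is not available for chains of length $<m$.

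Both failures are realised by $\varphi_0 := (\mu A.\, q(1) \vee \Mudiamanon{1} A) \wedge (\nu B.\, q(1) \wedge \Muboxanon{1} B)$: here $A$ and $B$ are $>_{\varphi_0}$-incomparable, $\ad{\varphi_0}{A} = \ad{\varphi_0}{B} = 1$, yet their types differ; since every alternating chain has length $1$, we have $\varphi_0 \in \Sigma_2^1$ vacuously, and with $m=2$, $i=1$ the stated parity rule would force type $\nu$ on both variables, which $A$ violates. (Even the paper's own example with alternation type $(\mu,\nu,\mu)$, which the paper places in $\Sigma_m^2$ for all $m>3$, violates the rule for $m=4$, e.g.\ at $Z$.) So the lemma is true only under an implicit saturation hypothesis: every non-extendable alternating chain of $\varphi$ has length exactly $m$ and top type $\mu$. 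That hypothesis is harmless for the paper --- any formula can be padded with dummy fixpoints, and the application inside Lemma~\ref{lem:correct} tacitly relies on it --- and under it your argument does go through, with consistency obtained by splicing: if a variable at position $j$ from the bottom of a full chain had a descending alternating chain longer than $j$, splicing it into the segment above that variable would yield an alternating chain of length $>m$. Your monotonicity remark, by contrast, is not the right tool; its type-preserving half (``cannot raise the alternation depth'') is itself false, e.g.\ $X \ge_\varphi Z$ with both of type $\mu$ but $\ad{\varphi}{X}=3>1=\ad{\varphi}{Z}$ in $\mu X.\, (\mu Z.\, q(1) \vee X) \wedge \nu W.\mu V.\, X \vee V \vee W$. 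A correct write-up must state the saturation assumption (or restrict to suitably normalised formulas) explicitly; as it stands, the crux is not proved and, at this level of generality, cannot be.
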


\paragraph*{Model Checking Games.} We briefly recall model checking games for the polyadic $\mu$-calculus \cite{LL-FICS12}. They are defined in
the same style as the model checking games for the modal $\mu$-calculus \cite{Stirling95} as a game played between players \verifier and \refuter
on the product space of an LTS and a formula. Such games can be used to reason about the satisfaction of a formula in a structure since both
satisfaction and non-satisfaction are reduced to the existence of winning strategies for one of the players in these model checking games.

As with the modal $\mu$-calculus games, the model checking games for the polyadic $\mu$-calculus are nothing more than parity games. However,
they are played using $k$ pebbles in the LTS and one pebble on the set of subformulas of the input formula. Hence, a configuration is a 
$k+1$-tuple written $s_1,\ldots,s_k \vdash \psi$ where the $s_i$ are states of the underlying LTS $\lts = (S,\Transition{}{}{},\lambda,s_I)$
and $\psi$ is a subformula of the underlying formula $\varphi$.

The rules are as follows.
\begin{itemize}
\item In a configuration of the form $s_1,\ldots,s_k \vdash \psi_1 \vee \psi_2$, player \verifier chooses an $i \in \{1,2\}$ and the play
      continues with $s_1,\ldots,s_k \vdash \psi_i$. Intuitively, \verifier moves the formula pebble to a disjunct from the current disjunction.
\item Likewise, in a configuration of the form $s_1,\ldots,s_k \vdash \psi_1 \wedge \psi_2$, player \refuter chooses such an $i$. Here, this can
      be seen as refuter moving the formula pebble.
\item In a configuration of the form $s_1,\ldots,s_k \vdash \Mudiam{a}{i}\psi$, player \verifier chooses a $t$ such that $\Transition{s_i}{a}{t}$
      and the play continues with $s_1,\ldots,s_{i-1},t,s_{i+1},\ldots,s_k \vdash \psi$. Intuitively, \verifier moves the $i$-the state pebble along
      an outgoing $a$-transition. The other $k-1$ pebbles that are on states remain where they are. The formula pebble is also moved to the next
      subformula.
\item Likewise, in a configuration of the form $s_1,\ldots,s_k \vdash \Mubox{a}{i}\psi$, player \refuter chooses such a $t$.
\item In a configuration of the form $s_1,\ldots,s_k \vdash \eta X.\psi$ or $s_1,\ldots,s_k \vdash X$ such that $\mathit{fp}_\varphi(X) = \eta X.\psi$,
      the formula pebble is simply moved to $\psi$, i.e.\ the play continues with $s_1,\ldots,s_k \vdash \psi$.
\end{itemize}
A player wins a play if the opponent cannot carry out a move anymore. Moreover, \verifier wins a play that reaches a configuration of the
form $s_1,\ldots,s_k \vdash q(i)$ if $q \in \lambda(s_i)$. If, on the other hand, $q \not\in \lambda(s_i)$ then player \refuter wins this play.
Finally, there are infinite plays, and the winner is determines by the necessarily unique outermost fixpoint variable (i.e.\ the largest with 
respect to $>_\varphi$) that occurs infinitely often in this play. If its fixpoint type is $\nu$, then \verifier wins, otherwise it is $\mu$ and
\refuter wins.

The main advantage of these model checking games is the characterisation of the satisfaction relation via winning strategies in parity games
(which they essentially are).

\begin{proposition}[\cite{LL-FICS12}]
Player \verifier has a winning strategy in the game in $\lts$ and a closed $\varphi$, starting in the configuration $s_1,\ldots,s_k \vdash \varphi$
iff $\lts, s_1,\ldots,s_k \models \varphi$.
\end{proposition}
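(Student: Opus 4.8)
The plan is to treat these games as what they are — parity games — and to prove the equivalence through an ordinal \emph{signature} argument rather than by naive structural induction, which fails because of the fixpoint operators. The priorities attached to configurations come from the finitely many alternation depths $\ad{\varphi}{X}$, whose fixpoint types are pinned down by Lemma~\ref{lem:ad}, so that ``the $>_\varphi$-outermost variable seen infinitely often decides the winner'' is exactly a max-parity condition; hence the game is a parity game and is positionally determined. It therefore suffices to show the two implications separately: if $\lts, s_1,\ldots,s_k \models \varphi$ then \verifier wins from $s_1,\ldots,s_k \vdash \varphi$, and dually if $\lts, s_1,\ldots,s_k \not\models \varphi$ then \refuter wins. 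Determinacy then packages these into the stated biconditional.

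For the forward direction I would first expand every least fixpoint into its Knaster--Tarski ordinal approximants, so that $\sem{\mu X.\psi}{\rho}{\lts} = \bigcup_{\alpha} R_\alpha$ with $R_0 = \emptyset$ and $R_{\alpha+1} = \sem{\psi}{\rho[X \mapsto R_\alpha]}{\lts}$. To each semantically true configuration $\bar s \vdash \psi$ I attach a \emph{signature}: a tuple of ordinals indexed by the $\mu$-variables of $\varphi$, ordered from outermost to innermost by $>_\varphi$ (equivalently, by decreasing $\ad{\varphi}{X}$), whose $X$-component records the least approximant stage at which the current tuple enters $\fp{\varphi}(X)$. \verifier's strategy is then read off the semantics clauses: at a disjunction she moves to a true disjunct, at $\Mudiam{a}{i}$ she moves the $i$-th pebble to a witnessing successor, at a replacement the pebbles are rearranged deterministically, and so on; in each local case the matching semantic clause guarantees a true successor configuration with no lexicographic increase of the signature. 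The one place where progress is forced is the unfolding of a $\mu$-variable: passing from $\bar s \vdash X$ to $\bar s \vdash \psi$ strictly decreases the $X$-component while leaving all outer components fixed.

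The crux — and the step I expect to be the main obstacle — is verifying that this signature-guided strategy actually satisfies the parity winning condition on infinite plays. Since the lexicographic order on signatures is well-founded, no $\mu$-variable can be the $>_\varphi$-outermost variable recurring infinitely often: were $X$ such a variable, its component would strictly decrease at each of its infinitely many unfoldings while every outer component stayed fixed, an impossible infinite descent in the ordinals. Hence the outermost infinitely-recurring variable is of type $\nu$, which is precisely \verifier's winning condition; truth is preserved along the play, so every halting configuration $\bar s \vdash q(i)$ that is reached has $q \in \lambda(s_i)$, and \verifier never gets stuck.

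For the converse I would run the exact dual: if $\lts, \bar s \not\models \varphi$, greatest fixpoints are expanded as decreasing approximants, \refuter maintains falsity using dual signatures on the $\nu$-variables, and the same well-foundedness argument forces the outermost infinitely-recurring variable to be a $\mu$, so \refuter wins; positional determinacy then yields the biconditional. I would stress that the polyadic features are cosmetic here: the fixpoint semantics lives over the complete lattice $(2^{S^k}, \subseteq)$ exactly as the monadic semantics lives over $(2^S, \subseteq)$, while the indexed modalities and the replacement operator $\kappa$ only advance or permute individual pebbles in a way that matches their semantic clauses coordinate-wise, so the entire signature machinery transfers verbatim from the monadic $\mu$-calculus model-checking games.
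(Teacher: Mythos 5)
First, a point of order: the paper does not prove this proposition at all. It is stated as an imported result, cited from \cite{LL-FICS12}, and used as a black box in Section~\ref{sec:hierarchy}, so there is no in-paper proof to measure your attempt against; the comparison has to be with the standard literature. Your argument is the classical one for $\mu$-calculus model-checking games --- read the game as a parity game with priorities given by alternation depth (justified by Lemma~\ref{lem:ad}), prove ``truth implies \verifier wins'' via a Streett--Emerson/Stirling ordinal-signature strategy, prove the converse by the dual signatures on $\nu$-variables, and combine --- and it is essentially correct, including its identification of the real crux: truth preservation handles finite plays, and only the well-foundedness of lexicographically ordered signatures rules out a $\mu$-variable being the $>_\varphi$-outermost variable recurring infinitely often. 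Two details need care in a full write-up. The signature of a true configuration must be defined as the lexicographically least \emph{tuple} of ordinals (outermost $\mu$-variables first) under which the configuration holds, not component-wise as ``the least stage at which the tuple enters $\fp{\varphi}(X)$''; the inner stages depend on the outer interpretations, and only the joint definition makes the no-increase claims for \refuter's moves (conjunctions, boxes) provable. Also, the paper's list of game rules omits the move for replacement formulas $\kappa\psi$; your deterministic pebble-permutation rule is the intended one but has to be supplied, and your appeal to positional determinacy can be weakened to the trivial observation that both players cannot have winning strategies from the same configuration. Finally, note that the paper's own conclusion points to a genuinely shorter route: by \cite{Otto/99b,LL-FICS12} there is a truth-preserving translation of $\mucalc{k}$-formulas into ordinary $\mucalc{}$-formulas interpreted over the $k$-fold product of $\lts$ (with extra deterministic edges handling the replacements), under which the polyadic game becomes exactly the monadic model-checking game over the product, so the proposition follows from the known monadic result \cite{Stirling95} without redoing any signature bookkeeping. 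Your direct proof buys self-containedness; the product reduction buys brevity by reusing the monadic theorem verbatim.
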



\section{The Arity Hierarchy}
\label{sec:hierarchy}

\subsection{The Principle Construction}
\label{sec:construction}
The aim of this section is to show that  $\mucalc{1},\mucalc{2},\ldots$ forms a strict hierarchy with respect to expressive power.
The principles underlying the proof are easily explained: first we associate with each \mucalc{k}-formula $\varphi$ an LTS $\lts_\varphi$
with a designated starting state which we also call $\varphi$. Then we construct a closed \mucalc{k+1}-formula that, when given a $\lts_\varphi$, 
reads off what $\varphi$ is from $\lts_\varphi$ and simulates its evaluation on it, checking that it does \emph{not} hold on itself.

We first present the constructions principally, then discuss what results are achieved with the details of these constructions, and
finally optimise the constructions such that the desired hierarchy result is achieved. We use a singleton $\Act$ wich means that we simply
write $\Transition{s}{}{t}$ instead of $\Transition{s}{a}{t}$ for the single action name `$a$'. Likewise, we write
$\Mudiamanon{i}$ and $\Muboxanon{i}$ instead of $\Mudiam{a}{i}$ and $\Mubox{a}{i}$.

\paragraph*{Construction of $\lts_\varphi$.} Let $k \ge 1$ be fixed and take an arbitrary closed $\varphi \in \mucalc{k}$. We assume that the
set of propositions underlying $\varphi$ is $\Prop = \{q_0,q_1,q_2,\ldots\}$. The construction of
$\lts_\varphi$ is largely based on the syntax-tree, respectively syntax-DAG of $\varphi$. We have 
$\lts_\varphi = (\sub{\varphi},\Transition{}{}{},\lambda,\varphi)$ with transitions given as follows.
\begin{align*}
\Transition{\psi_1 \varodot \psi_2}{}{}&{\psi_i} \quad && \text{for every } \psi_1 \varodot \psi_2 \in \sub{\varphi}, \varodot \in \{\wedge,\vee\} \text{ and every } i \in \{1,2\} \\
\Transition{\varodot\psi}{}{}&{\psi} \quad && \text{for every } \varodot\psi \in \sub{\varphi}, \varodot \in \{\Mudiamanon{i},\Muboxanon{i},\kappa\} \text{ and every } i \in \{1,\ldots,k\} \\
\Transition{\eta X.\psi}{}{}&{\psi} \quad && \text{for every } \eta X.\psi \in \sub{\varphi} \text{ and } \eta \in \{\mu,\nu\} \\
\Transition{X}{}{}&{\psi} \quad &&\text{for every } X \in \sub{\varphi} \text{ such that } \fp{\varphi}(X) = \eta X.\psi
\end{align*}
Thus, the graph structure of $\lts_\varphi$ is indeed almost the one of the syntax-DAG of $\varphi$ except for additional edges from fixpoint
variables to their defining fixpoint formula.

The labelling of the nodes in $\lts_\varphi$ remains to be defined. Remember that the ultimate goal is to construct a formula $\Phi^{k+1}$ which
simulates the evaluation of $\varphi$ on $\lts_\varphi$. We will use $k$ pebbles in order to simulate the $k$ pebbles used in $\varphi$, and
one additional pebble in order to store the subformula that is currently in question. Note that the satisfaction of a (closed) formula on
an LTS only depends on the satisfaction of its subformulas. The position of this additional pebble will determine which subformula is
currently evaluated. We therefore need to make the kind of subformula at a node in $\lts_\varphi$ visible to a formula that is interpreted over
it. This is what the state labels will be used for. Let 
\begin{displaymath}
\Prop_0 := \{ \propPLit{j}{i}, \propNLit{j}{i} \mid 1 \le i \le k, j \in \Nat \} \cup \{\propAnd, \propOr \} \cup 
\{ \propDiam{i}, \propBox{i} \mid 1 \le i \le k \} \cup \{ \propFP{i} \mid 0 \le i \le m \} \cup \{ \propRepl{\kappa} \mid \kappa \in \Nat \dashrightarrow \Nat \}\ .
\end{displaymath}
The labelling in $\lts_\varphi$ is given as follows. Note that $\Prop$ is countably infinite.
\begin{align*}
\propPLit{j}{i} &\in \lambda(q_j(i)) &&\text{for every positive literal } q_j(i) \in \sub{\varphi} \\ 
\propNLit{j}{i} &\in \lambda(q_j(i)) &&\text{for every negative literal } \neg q_j(i) \in \sub{\varphi} \\ 
\propAnd &\in \lambda(\psi_1 \wedge \psi_2) && \text{for every } \psi_1 \wedge \psi_2 \in \sub{\varphi} \\
\propOr &\in \lambda(\psi_1 \vee \psi_2) && \text{for every } \psi_1 \vee \psi_2 \in \sub{\varphi} \\
\propDiam{i} &\in \lambda(\Mudiamanon{i}\psi) && \text{for every } \Mudiamanon{i}\psi \in \sub{\varphi}, 1 \le i \le k \\
\propBox{i} &\in \lambda(\Muboxanon{i}\psi) && \text{for every } \Muboxanon{i}\psi \in \sub{\varphi}, 1 \le i \le k \\
\propRepl{\kappa} &\in \lambda(\kappa\psi) && \text{for every } \kappa\psi \in \sub{\varphi}, \kappa: \Nat \dashrightarrow \Nat \\
\propFP{i} &\in \lambda(\eta X.\psi),\lambda(X) && \text{for every } \eta X.\psi,X \in \sub{\varphi}, \eta \in \{\mu,\nu\} 
\text{ with } \ad{\varphi}{X} = i
\end{align*}
With those labels a formula can see what the subformula at a node is that it is interpreted over, for instance whether it is a formula with a
replacement modality as the principle operator, etc.

\paragraph{The construction of the simulating formulas.}

Next we construct formulas that simulate a $\varphi$ on its own LTs representation $\lts_\varphi$ and check that they do not satisfy themselves.
The trick is simple: if $\varphi \in \mucalc{k}$ then we use $k$ pebbles to simulate what $\varphi$ would do with its $k$ pebbles, and one additional
pebble to check wich subformula we are currently evaluating. We let this pebble move through the syntax-DAG in a form that is dual to the semantics
of the actual operators in the underlying $\varphi$; for instance in a conjunction we look for one conjunct, in a disjunction we continue with both disjuncts. We will use several fixpoint variables to dualise the fixpoint condition similar to the way it is done in the Walukiewcz formulas that
express the winning conditions in parity games \cite{journals/tcs/Walukiewicz02}.

Let $m \ge 0$ and $k \ge 1$ be fixed. We construct a formula $\Phi^{k+1}_m \in \mucalc{k+1}$ as follows.
\begin{align*}
\Phi^{k+1}_m \enspace := \enspace \nu X_m. \mu X_{m-1} \ldots \eta X_1.\Big(
&\bigwedge\limits_{i=1}^k\bigwedge\limits_{j \in \Nat} \propPLit{j}{i}(k+1) \to \neg q_j(i) \\
\wedge \enspace &\bigwedge\limits_{i=1}^k\bigwedge\limits_{j \in \Nat} \propNLit{j}{i}(k+1) \to q_j(i) \\
\wedge \enspace &\propAnd(k+1) \to \Mudiamanon{k+1}X_1 \\
\wedge \enspace &\propOr(k+1) \to \Muboxanon{k+1}X_1 \\
\wedge \enspace &\bigwedge\limits_{i=1}^k \propDiam{i}(k+1) \to \Muboxanon{i}\Muboxanon{k+1}X_1 \\
\wedge \enspace &\bigwedge\limits_{i=1}^k \propBox{i}(k+1) \to \Mudiamanon{i}\Muboxanon{k+1}X_1 \\
\wedge \enspace &\bigwedge\limits_{\kappa \in \Nat \dasharrow \Nat} \propRepl{\kappa}(k+1) \to \kappa\Muboxanon{k+1}X_1 \\
\wedge \enspace &\bigwedge\limits_{i=1}^m \propFP{i}(k+1) \to \Muboxanon{k+1}X_i \Big)
\end{align*}
where $\eta = \nu$ if $m$ is odd and $\eta = \mu$ otherwise. 

\begin{remark} \normalfont
\label{rem:noformula}
Of course, $\Phi^{k+1}_m$ is not a formula strictly speaking because of the potentially
infinite conjunctions in the first two clauses. There is an easy way to fix this: we assume a finite set of atomic propositions $\{p,q,\ldots\}$. 
Then a finite conjunction obviously suffices and $\Phi^{k+1}_m$ is indeed a formula. However, we need to address the issue of choice of atomic
propositions in Section~\ref{sec:fixsig} below anyway. So for the moment we simply accept the small flaw about infinite conjunctions as an
intermediate step and as a means to separate the principles from the details in this construction.

Note that this problem does not arise in the clause with the $\propRepl{\kappa}$ since $k$ is fixed, and $\kappa$ can at most change the first $k$ pebbles.
Hence, there are only finitely many such $\kappa$.
\end{remark}

We need two observations about $\Phi^{k+1}_m$. The first, a syntactic one, is easy to verify.

\begin{lemma}
\label{lem:inPi}
For every $m \ge 0$ and every $k \ge 1$ we have $\Phi_m^{k+1} \in \Pi_{m}^{k+1}$.
\end{lemma}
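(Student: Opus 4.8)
The plan is to verify the two defining conditions of membership in $\Pi_m^{k+1}$ separately: that the arity of $\Phi_m^{k+1}$ is at most $k+1$, and that its alternation type has length at most $m$ with its outermost entry being $\nu$. Set $\Phi := \Phi_m^{k+1}$ for brevity throughout.

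For the arity, I would simply inspect every indexed operator occurring in $\Phi$. The literals $q_j(i)$ and $\neg q_j(i)$ use indices $i \le k$, and the simulated modalities $\Mudiamanon{i},\Muboxanon{i}$ likewise use $i \le k$; the propositional tests $\propPLit{j}{i}(k+1),\propAnd(k+1),\ldots,\propFP{i}(k+1)$ and the control modalities $\Mudiamanon{k+1},\Muboxanon{k+1}$ use index $k+1$; and each replacement $\kappa$ appearing here is one inherited from a $\mucalc{k}$-formula, so by the observation in Remark~\ref{rem:noformula} it only affects the first $k$ components. Hence the largest index anywhere is exactly $k+1$, so $\ar{\Phi} = k+1$.

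For the alternation structure the key step is to pin down the dependency order $>_{\Phi}$ on the variables $X_1,\ldots,X_m$. Writing $\psi$ for the body (the large conjunction), we have $\fp{\Phi}(X_j) = \eta_j X_j.\,\eta_{j-1}X_{j-1}.\,\cdots\,\eta_1 X_1.\,\psi$. Every $X_i$ occurs in $\psi$ — concretely in the final conjunct $\bigwedge_{i=1}^m \propFP{i}(k+1) \to \Muboxanon{k+1}X_i$ — and $X_i$ is bound inside $\fp{\Phi}(X_j)$ exactly when $i \le j$. Therefore $X_i$ occurs free in $\fp{\Phi}(X_j)$ precisely when $i > j$, giving $X_i \ge_{\Phi} X_j$ for all $i>j$. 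These dependencies point only downward in the index, so there are no cycles among distinct variables; consequently the strict relation is exactly $>_{\Phi} = \{(X_i,X_j)\mid i>j\}$, and $X_m >_{\Phi} X_{m-1} >_{\Phi} \cdots >_{\Phi} X_1$ is the unique maximal descending chain, of length $m$. It then remains to read off the types: by the nesting convention ($\eta = \nu$ iff $m$ is odd) the type of $X_i$ is $\nu$ exactly when $i$ has the same parity as $m$, so adjacent variables in the chain always carry different fixpoint types. Hence the chain is already fully alternating: no collapsing of equal adjacent types occurs, so $\ad{\Phi}{X_i} = i$ and the alternation type has length exactly $m$ with outermost variable $X_m$ of type $\nu$. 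This is precisely the requirement that the $m$-th-last entry be $\nu$, so $\Phi \in \Pi_m^{k+1}$; the base case $m=0$ is immediate since $\Phi_0^{k+1}$ is then fixpoint-free.

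I expect the only real subtlety — rather than a genuine obstacle — to be the bookkeeping around free occurrences. One must check that the occurrences of $X_2,\ldots,X_m$ inside the body genuinely remain free in $\fp{\Phi}(X_1)$, i.e.\ are not captured by the innermost binder $\eta_1 X_1$, since it is exactly these occurrences that forge a single chain of length $m$ rather than a collection of shorter, disconnected dependencies. Everything else reduces to a direct syntactic inspection matched against the definitions of $\ad{\cdot}{\cdot}$ and of $\Pi_m^{k+1}$.
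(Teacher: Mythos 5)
Your proof is correct and does exactly what the paper intends: the paper offers no proof at all for this lemma, merely calling it ``a syntactic one, easy to verify,'' and your direct inspection---bounding the arity by $k+1$ and reading off the dependency chain $X_m >_{\Phi} \cdots >_{\Phi} X_1$ with alternating types headed by $\nu$---is precisely that verification spelled out. In particular your identification of $\ge_{\Phi}$ via free occurrences in $\fp{\Phi}(X_j)$ and the parity argument for the types match the paper's definitions of alternation type and of $\Pi_m^{k+1}$.
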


The second one is of a semantic nature and states that $\Phi^{k+1}_m$ does what it is supposed to do.

\begin{lemma}
\label{lem:correct}
Let $m \ge 0$, $k \ge 1$ and $\varphi \in \Sigma_m^k$. Then we have $\lts_\varphi, \varphi \models \Phi_m^{k+1}$ iff $\lts_\varphi, \varphi \not\models \varphi$.
\end{lemma}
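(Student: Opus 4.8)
The plan is to argue entirely through the model checking games of the preceding Proposition, together with determinacy of parity games. By that Proposition, $\lts_\varphi, \varphi \models \Phi_m^{k+1}$ holds iff \verifier wins the game $G_\Phi$ on $\lts_\varphi$ starting in $(\varphi,\ldots,\varphi) \vdash \Phi_m^{k+1}$ with $k{+}1$ state pebbles, whereas $\lts_\varphi, \varphi \not\models \varphi$ holds iff \refuter wins the game $G_\varphi$ on $\lts_\varphi$ starting in $(\varphi,\ldots,\varphi) \vdash \varphi$ with $k$ state pebbles. It therefore suffices to show that \verifier wins $G_\Phi$ iff \refuter wins $G_\varphi$, which I would do by exhibiting a tight simulation between the two games that swaps the roles of the two players. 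The simulation matches a $G_\varphi$-configuration $(s_1,\ldots,s_k) \vdash \psi$ with the $G_\Phi$-configuration in which the first $k$ pebbles carry the same states $s_1,\ldots,s_k$, the $(k{+}1)$-st pebble sits on the state $\psi$ of $\lts_\varphi$, and the formula pebble has just unfolded the fixpoint block of $\Phi_m^{k+1}$ down to its matrix (the big conjunction).

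The core of the argument is to check, operator by operator, that one step of $G_\varphi$ at $\psi$ is simulated by the corresponding block of $G_\Phi$ while dualising both the outcome and the ownership of choices. Since each subformula-state carries exactly one of the type-labels in $\Prop_0$, at the matrix (a conjunction, owned by \refuter) every clause other than the unique one whose guard $\propAnd(k{+}1), \propOr(k{+}1), \propDiam{i}(k{+}1)$, etc.\ holds is won outright by \verifier on its left disjunct; so \refuter is effectively forced into the single active clause, and the play proceeds to its body. Then: at a literal $q_j(i)$ the clause evaluates $\neg q_j(i)$, which \verifier wins iff \refuter wins the literal in $G_\varphi$, and dually for $\neg q_j(i)$; at $\wedge$ the matrix uses $\Mudiamanon{k+1}$, so \verifier now picks the conjunct that \refuter would pick in $G_\varphi$, while at $\vee$ it uses $\Muboxanon{k+1}$, swapping the roles the other way; at $\Mudiamanon{i}\psi'$ the body $\Muboxanon{i}\Muboxanon{k+1}X_1$ lets \refuter move the $i$-th pebble exactly where \verifier could move it in $G_\varphi$ while the forced $\Muboxanon{k+1}$ advances pebble $k{+}1$ to the unique child $\psi'$, and dually at $\Muboxanon{i}\psi'$; a replacement $\kappa$ acts identically on both sides and leaves pebble $k{+}1$ to move on to $\psi'$. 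In each case a position where one player is stuck on one side corresponds to a position where the \emph{other} player is stuck on the other side, so the dualisation of finite plays is immediate.

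The delicate part is the infinite plays, i.e.\ the fixpoint bookkeeping. The clause $\propFP{i}(k{+}1) \to \Muboxanon{k+1}X_i$ forces $G_\Phi$ to visit the variable $X_i$ of $\Phi_m^{k+1}$ exactly when $G_\varphi$ passes through a fixpoint of $\varphi$ of alternation depth $i$, and the ensuing unfolding of $X_{i-1},\ldots,X_1$ only re-enters strictly inner binders, never revisiting a variable of index $\ge i$. Since pebble $k{+}1$ tracks the formula pebble of $G_\varphi$ faithfully, the set of alternation depths visited infinitely often is the same in both games; let $i^*$ be its maximum. In $G_\varphi$ the variables seen infinitely often form a $>_\varphi$-chain and hence have a greatest element $X^*$, and one checks $\ad{\varphi}{X^*}=i^*$: after a finite prefix the play never leaves the subformula of $X^*$, so every fixpoint it meets infinitely often has alternation depth at most $\ad{\varphi}{X^*}$, while $X^*$ itself attains $i^*$. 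Correspondingly, the outermost variable seen infinitely often in $G_\Phi$ is $X_{i^*}$. By Lemma~\ref{lem:ad} the type of $X^*$ in $\varphi \in \Sigma_m^k$ is $\mu$ exactly when $i^* \equiv m \pmod{2}$, whereas the block of $\Phi_m^{k+1}$ alternates strictly with outermost variable $X_m$ of type $\nu$, so $X_{i^*}$ has type $\nu$ exactly in the same case; the two fixpoint types are thus always opposite, and an infinite play is won by \refuter in $G_\varphi$ iff its image is won by \verifier in $G_\Phi$. Combining the finite and infinite cases, a winning strategy for \verifier in $G_\Phi$ translates into one for \refuter in $G_\varphi$ and vice versa, which with determinacy gives the claim. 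I expect the identity $\ad{\varphi}{X^*}=i^*$, and the observation that the inner unfoldings in $\Phi_m^{k+1}$ leave the outermost variable seen infinitely often undisturbed, to be the only genuinely subtle points.
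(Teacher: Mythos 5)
Your proposal is correct and follows essentially the same route as the paper's own proof: a role-swapping strategy transfer between the two model checking games, in which stuck positions and literals are dualised clause by clause and infinite plays are handled by observing (via Lemma~\ref{lem:ad}) that the fixpoint type of the outermost variable $X^*$ seen infinitely often in the game for $\varphi$ is always opposite to that of $X_{i^*}$ in the game for $\Phi_m^{k+1}$. The only cosmetic difference is that you package both implications into a single simulation plus an explicit appeal to parity-game determinacy, whereas the paper proves the two directions separately (using determinacy implicitly when it reads $\lts_\varphi,\varphi \not\models \varphi$ as \refuter having a winning strategy).
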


\begin{proof}
We argue using model checking games for \mucalc{\omega}. 

``$\Leftarrow$'' Suppose we have $\lts_\varphi, \varphi \not\models \varphi$, i.e.\ \refuter has a winning strategy for the game $\mathcal{G}$ 
played on $\lts_\varphi$, $k$ pebbles initially placed on the node $\varphi$ in it, and the \mucalc{k}-formula $\varphi$ itself. This gives rise to a
strategy for player \verifier in the game $\mathcal{G}'$ played on $\lts_\varphi$, now $k+1$ pebbles placed on node $\varphi$ initially, and the formula
$\Phi^{k+1}_m$. The fact that each node in $\lts_\varphi$ satisfies exactly on atomic proposition of the kind $p^*$ and at most one $i$ or
at most on $\kappa$ means that any play which \refuter does not lose immediately selects a clause in $\Phi^{k+1}_m$, carries out some operation
on the pebbles and then loops through some fixpoint variable. 

It is not hard to see that \verifier can use \refuter's strategy from $\mathcal{G}$ to follow the operations carried out on the pebbles prescribed
by each clause without losing. For instance, if the third clause demands her to choose a successor for the $k+1$-st pebble then she takes the one
that represents the conjunct that \refuter would chose in the same situation in $\mathcal{G}$. This way, every play in $\mathcal{G}'$ that conforms
to her strategy has an underlying play in $\mathcal{G}$ that conforms to \refuter's strategy there. If that one is won by \refuter because \verifier
got stuck at some point then this can only be because the play reached a position of the form $(s_1,\ldots,s_k) \vdash \Mudiamanon{i}\psi$ and
$s_i$ has no successor. In the corresponding play in $\mathcal{G}'$, pebble $k+1$ will be on a node with label $\propDiam{i}$, and this requires 
\refuter to move the $i$-th pebble to a successor which equally he cannot. Notice that the clause with $\propDiam{i}$ contains the operator 
$\Muboxanon{}$ and vice-versa. Thus, \verifier wins the corresponding play in $\mathcal{G}'$.

Suppose that the underlying play in $\mathcal{G}$ is won by \refuter because the largest fixpoint variable $X$ that is seen infinitely often is of
type $\mu$. Then we must have $\ad{\varphi}{X} = i$ for some $i$, and then the play in $\mathcal{G}'$ will infinitely often go through positions that 
are labelled with $\propFP{i}$, and it will eventually not go through positions that are labelled with $\propFP{i'}$ with $i' > i$ anymore.
All that remains to be seen in this case is that the largest variable seen infinitely often in the play on $\Phi^{k+1}_m$ is of type $\nu$. This is
a direct consequence of Lemma~\ref{lem:ad}. Hence, \verifier wins such plays, too, which shows that her strategy derived from \refuter's winning
strategy in $\mathcal{G}$ is winning for her in $\mathcal{G}'$.

``$\Rightarrow$'' This is shown by contraposition in the same way now assuming a winning strategy for \verifier in the game on $\lts_\varphi$ and
$\varphi$ and turning it into a winning strategy for \refuter in the game on $\lts_\varphi$ and $\Phi^{k+1}_m$.
\end{proof}

\begin{lemma}
Let $m \ge 0$ and $k \ge 1$. There is no $\varphi \in \Sigma_m^k$ such that $\varphi \equiv \Phi_m^{k+1}$.
\end{lemma}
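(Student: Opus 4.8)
The plan is to argue by contradiction, exploiting the diagonalisation already packaged in Lemma~\ref{lem:correct}. Suppose there were some $\varphi \in \Sigma_m^k$ with $\varphi \equiv \Phi_m^{k+1}$. The idea is to confront $\varphi$ with its own syntactic encoding $\lts_\varphi$ at the diagonal tuple $(\varphi,\ldots,\varphi)$ and derive the liar-style paradox ``$\varphi$ holds on itself iff it does not''. Note that there is no circularity in fixing $\varphi$ first and then building $\lts_\varphi$ from it: the assumed equivalence holds on \emph{all} transition systems, and $\lts_\varphi$ is a perfectly legitimate one.

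First I would unwind what the assumed equivalence means across the arity gap. Since $\varphi \in \mucalc{k}$ while $\Phi_m^{k+1} \in \mucalc{k+1}$, the equivalence $\varphi \equiv \Phi_m^{k+1}$ must be read with $\varphi$ promoted to a $(k{+}1)$-ary formula whose last component is left unconstrained, exactly as discussed after the definition of $\equiv$. Evaluating both sides on $\lts_\varphi$ at the length-$(k{+}1)$ diagonal $(\varphi,\ldots,\varphi)$ therefore yields, on the left, precisely the length-$k$ evaluation $\lts_\varphi,\varphi \models \varphi$ (the extra slot being free), and on the right $\lts_\varphi,\varphi \models \Phi_m^{k+1}$. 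Hence the equivalence forces
\[
\lts_\varphi,\varphi \models \varphi \quad\text{iff}\quad \lts_\varphi,\varphi \models \Phi_m^{k+1}.
\]

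Next I would invoke Lemma~\ref{lem:correct}, which applies precisely because $\varphi \in \Sigma_m^k$, to rewrite the right-hand side as $\lts_\varphi,\varphi \not\models \varphi$. Chaining the two biconditionals gives $\lts_\varphi,\varphi \models \varphi$ iff $\lts_\varphi,\varphi \not\models \varphi$, which is absurd and completes the contradiction.

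The main obstacle is not the logical core—that is a one-line paradox—but the bookkeeping around the convention that lets formulas of different arity be compared. I would take care to justify that promoting $\varphi$ to arity $k{+}1$ and restricting to the diagonal genuinely collapses the $(k{+}1)$-ary satisfaction of $\varphi$ back to its $k$-ary satisfaction, so that the two occurrences of the shorthand $\lts_\varphi,\varphi \models \cdot$ (one referring to the $k$-pebble game of $\varphi$, the other to the $(k{+}1)$-pebble game of $\Phi_m^{k+1}$) are exactly the relata appearing in Lemma~\ref{lem:correct}. The hypothesis $\varphi \in \Sigma_m^k$ is what makes Lemma~\ref{lem:correct} applicable, and it is also the reason the statement rules out equivalents only within $\Sigma_m^k$ rather than within all of $\mucalc{k}$.
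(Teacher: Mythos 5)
Your proposal is correct and is essentially identical to the paper's proof: the same diagonalisation on $\lts_\varphi$ at the tuple $(\varphi,\ldots,\varphi)$, chaining the assumed equivalence with Lemma~\ref{lem:correct} to obtain $\lts_\varphi,\varphi \models \varphi$ iff $\lts_\varphi,\varphi \not\models \varphi$. Your extra care about promoting $\varphi$ to arity $k{+}1$ via the unconstrained-component convention is a sound elaboration of what the paper leaves implicit, not a different argument.
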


\begin{proof}
Suppose there was such a $\varphi$. Then we would have
\begin{displaymath}
\lts_\varphi, \varphi \models \varphi \quad \text{iff} \quad \lts_\varphi, \varphi \models \Phi_m^{k+1} \quad \text{iff} \quad 
\lts_\varphi, \varphi \not\models \varphi
\end{displaymath}
first because of the assumed equivalence and second because of Lemma~\ref{lem:correct}.
\end{proof}

Thus, we could summarise the findings from these lemmas and also uses the observation that the entire construction is equally possible
for formula in $\Pi_m^{k}$ then yielding a $\Phi_m^{k+1} \in \Sigma_m^{k+1}$. Then we get that
for all $m \ge 0$ and $k \ge 1$ we have $\Sigma_m^{k} \not\supseteq \Pi_m^{k+1}$ and $\Pi_m^k \not\supseteq \Sigma_m^{k+1}$. Consequently, we have
$\Sigma^k_m \subsetneq \Sigma_{m+1}^{k+1}$ and $\Pi_m^k \subsetneq \Pi_{m+1}^{k+1}$.

The reason why we do not formally state this as a theorem (yet) is discussed next.

\subsection{The Hierarchy over a Fixed Small Signature}
\label{sec:fixsig}

Consider what is happing with the set of atomic propositions in the construction of the previous Section~\ref{sec:construction}. We have already
seen in Remark~\ref{rem:noformula} that the construction does not work for an infinite set of atomic propositions $\Prop$. Even if this is finite,
then the construction does work but it has the following effect: we simulate a formula with $k$ pebbles over $\Prop$ by a formula with $k+1$
pebbles over $\Prop \cup \Prop_0$. It is not surprising that we obtain formulas over this extended signature which cannot be expressed over the
smaller one. In order to argue that the hierarchy of inexpressibility as laid out in the previous section is truly meaningful
we would need $\Prop = \Prop \cup \Prop_0$ or, at least, that the two sets have equal cardinality so that some bijection between them could be 
used as an encoding. 

In the following we will show how the construction can be fixed such that it works over a fixed finite set 
\begin{displaymath}
\Prop_1 := \{ \propPLitNo, \propNLitNo, \propAnd, \propOr, \propDiamNo, \propBoxNo, \propFPNo, \propReplNo, \propSwapNo, \propStop \}
\end{displaymath}
of atomic propositions. Thus, we do not encore the index of propositions, the level in the fixpoint hierarchy, and the kind of operation on pebbles
in those propositions anymore. Instead we will encode this missing information in the graph structure of $\lts_\varphi$ (rather than in its labels). 
For the replacement modalities $\kappa$ we need a little preparation.

A replacement $\kappa$ is called simple if it is of the form $\repl{i \leftarrow j}$ or $\repl{i \leftrightarrow j}$. A formula is called
\emph{normalised} if every replacement in it is simple. The following is a simple consequence of the fact that every function 
$\kappa: \Nat \dashrightarrow \Nat$ that leaves all numbers greater than $k$ untouched, can be constructed by a sequence of swaps between
$i,j \le k$, followed by some simple mappings from some $i$ to a $j$.

\begin{lemma}
Let $m \ge 0$, $k \ge 1$. Every $\varphi \in \Sigma_m^k$, respectively $\Pi_m^k$ is equivalent to a normalised $\varphi' \in \Sigma_m^k$, 
respectively $\Pi_m^k$.
\end{lemma}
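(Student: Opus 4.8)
The plan is to handle each replacement operator occurring in $\varphi$ in isolation and then lift the local rewriting to the whole formula by congruence. The first ingredient is a semantic composition law for replacements: for any formula $\psi$ and any two replacements $\kappa_1,\kappa_2$ one has $\kappa_1\kappa_2\psi \equiv (\kappa_2\circ\kappa_1)\psi$. This follows by unfolding the semantic clause for the replacement operator twice: the $i$-th component of a tuple in $\sem{\kappa_1\kappa_2\psi}{\rho}{\lts}$ is obtained by permuting according to $\kappa_1$ and then according to $\kappa_2$, which is precisely the action of $\kappa_2\circ\kappa_1$. An easy induction then gives $\kappa_1\cdots\kappa_n\psi \equiv (\kappa_n\circ\cdots\circ\kappa_1)\psi$. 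Note the reversal of order between syntactic prefixing and function composition; keeping track of this is the only genuinely error-prone point in the argument.

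Next I would invoke the decomposition fact stated just before the lemma. Since $\varphi\in\mucalc{k}$, every replacement $\kappa$ occurring in $\varphi$ maps $\{1,\ldots,k\}$ into itself and fixes every index greater than $k$: if $\kappa(i)\ne i$ for some $i>k$, or if $\kappa(i)=j>k$ for some $i\le k$, then an index exceeding $k$ would occur in $\repl{\kappa}$, contradicting $\ar{\varphi}\le k$. Hence $\kappa$ can be written as a composition $\sigma_n\circ\cdots\circ\sigma_1$ of simple replacements, each using only indices $\le k$. Combined with the composition law this yields $\kappa\psi \equiv \sigma_1\cdots\sigma_n\psi$, where the right-hand side is normalised at its outermost operators and introduces no index above $k$.

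I would then lift this to all of $\varphi$. There are only finitely many subformulas of the shape $\kappa\psi$ with $\kappa$ non-simple; replacing each such $\kappa$ by the chain $\sigma_1\cdots\sigma_n$ produces a normalised $\varphi'$. That $\varphi'\equiv\varphi$ is immediate from the fact that $\equiv$ is a congruence for every operator of the logic: the semantics is defined by structural recursion, so substituting for a subformula an equivalent one with the same free variables preserves the denotation of the whole formula, including under fixpoint binders (whose semantics is monotone in the interpretation). Preservation of arity is clear, since each inserted $\sigma_l$ has arity $\le k$.

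Finally I would verify that $\varphi'$ stays in the same class. The rewriting inserts only replacement operators, which are unary and bind no variables; it neither adds nor removes any fixpoint quantifier and does not alter which variables occur free in which fixpoint subformula. Consequently the dependency order $>_{\varphi'}$ agrees with $>_\varphi$ on the fixpoint variables, the alternation type is unchanged, and every alternation depth $\ad{\varphi'}{X}$ equals $\ad{\varphi}{X}$. Hence $\varphi'$ lies in exactly the same class $\Sigma_m^k$ (resp.\ $\Pi_m^k$) as $\varphi$, which completes the proof. The main obstacle is purely one of bookkeeping, namely fixing the direction of composition in the composition law and confirming that the structural insertions leave the alternation data untouched; the combinatorial decomposition of $\kappa$ into simple maps is supplied by the fact already assumed.
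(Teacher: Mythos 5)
Your proposal is correct and follows essentially the same route as the paper, which offers no detailed proof beyond the remark that the lemma is a ``simple consequence'' of the fact that every replacement fixing all indices above $k$ decomposes into swaps and simple mappings on indices $\le k$. Your write-up just makes that sketch explicit --- the semantic composition law $\kappa_1\kappa_2\psi \equiv (\kappa_2\circ\kappa_1)\psi$ (with the correct order reversal under the paper's semantics of $\kappa\varphi$), substitution of equivalent subformulas by congruence, and the observation that inserting replacement operators leaves the fixpoint dependency order, hence the alternation class $\Sigma_m^k$ (resp.\ $\Pi_m^k$), untouched.
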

 
We therefore assume that from now on, all formulas $\varphi$ to be simulated are normalised. We change the construction of $\lts_\varphi$ as
follows. 
\begin{enumerate}
\item Suppose there is a state $s$ of the form $q_j(i)$ or $\neg q_j(i)$, necessarily labeled with $\propPLit{j}{i}$ or $\propNLit{j}{i}$. Replace
      the proposition by $\propPLitNo$, respectively $\propNLitNo$, and add a new finite path of length $i+j$ to this node such that the $(i-1)$-st
      new state has the label $\propStop$.
      \begin{center}
       \begin{tikzpicture}[thick,->, every state/.style={fill=black,minimum size=3mm,draw}]

          \node[state,label=180:{$\propPLit{3}{2}$}] (oldv) {};

          \path[<-,draw,shorten >=3mm] (oldv) -- ++(0,1) node {$\vdots$};

          \node[state,label=180:{$\propPLitNo$}] (newv) [right of=oldv,node distance=4cm] {};
          \path[<-,draw,shorten >=3mm] (newv) -- ++(0,1) node {$\vdots$};
          
          \node[state,label=90:{$\propStop$}] (1) [right of=newv] {};
          \node[state]                        (2) [right of=1]    {};
          \node[state]                        (3) [right of=2]    {};
          \node[state]                        (4) [right of=2]    {};
          \node[state]                        (5) [right of=2]    {};

          \path[->] (newv) edge (1) (1) edge (2) (2) edge (3) (3) edge (4) (4) edge (5); 

          \path[->,very thick, draw=black!60,decorate, decoration={coil, aspect=0}] ($(oldv)+(1,.5)$) -- ($(newv)+(-1.1,.5)$);
 
       \end{tikzpicture}
      \end{center}
      Let $\kappa^{\leftarrow}_{\mathsf{cyc}} := \repl{k \leftarrow 1,1 \leftarrow 2,\ldots, k-1 \leftarrow k}$ and consider the formula 
      \begin{equation}
        \label{eq:searchi}
         \mathit{searchPeb} := (\propStop \wedge q_j(1)) \vee \kappa^{\leftarrow}_{\mathsf{cyc}}\Mudiamanon{k+1}((\propStop \wedge q_j(1)) \vee \kappa^{\leftarrow}_{\mathsf{cyc}}\Mudiamanon{k+1}((\propStop \wedge q_j(1)) \vee \ldots \kappa^{\leftarrow}_{\mathsf{cyc}}\Mudiamanon{k+1}(\propStop \wedge q_j(1))\ldots)) 
      \end{equation}
      with $k-1$ occurrences of $\kappa^{\leftarrow}_{\mathsf{cyc}}$.
      It is true in $s$ at pebble $k+1$ with the additional path iff $q_j(i)$ was true in $s$ with the original construction. The first part of this
      new path is used to shift the pebbles until the $i$-th has become the first and then, instead of checking whether the $i$-th pebble is on
      a state satisfiying $q_j$, we can now check the first one instead. Note that this formulas moves pebble number $k+1$ along this new path but the 
      other $k$ pebbles remain where they are; apart from being cyclically changed around.

      We can of course equally construct such a formula that mimicks the checking of $\neg q_j(i)$.

      Finally, we also need to use the remaining path to read off the encoding of $j$. This can easily be done as follows.
      \begin{displaymath}
        \mathit{searchProp}_j := q_0(1) \vee \Mudiamanon{k+1}(q_1(1) \vee \Mudiamanon{k+1}(q_2(1) \vee \ldots \Mudiamanon{k+1}(q_{h-2}(1) \vee \Mudiamanon{k+1}q_{h-1}(1))\ldots))  
      \end{displaymath}
      This formula is then used instead of $q_j(1)$ in (\ref{eq:searchi}), and the resulting formula is used instead of $q_j(i)$ in
      the clause for $\propNLit{j}{i}$ in $\Phi^{k+1}_m$. Hence, this clase simply becomes
      \begin{displaymath}
       \ldots \wedge \propNLit{j}{i} \to \mathit{searchPeb}[\mathit{searchProp}/q_j(1)]
      \end{displaymath}
      where $\psi[\chi/\chi']$ denotes the formula that results from $\psi$ by replacing every subformula $\chi'$ with $\chi$.

\item An edge of the form $\Transition{\eta X.\psi}{}{\psi}$ or $\Transition{X}{}{\psi}$ is replaced in similar style by a sequence of $i$ edges,
      marking the last state after them with $\propStop$. Then we can replace the label $\propFP{i}$ with $\propFPNo$ in the first state, and the
      corresponding clause in $\Phi^{k+1}_m$ with 
      \begin{displaymath}
       \ldots \wedge \propFPNo \to \Muboxanon{k+1}((\propStop \wedge X_1) \vee \Muboxanon{k+1}((\propStop \wedge X_2) \vee \Muboxanon{k+1}(\ldots \vee \Muboxanon{k+1}(\propStop \wedge X_m)\ldots)))
      \end{displaymath}

\item An edge of the form $\Transition{\Mudiamanon{i}\psi}{}{\psi}$ or $\Transition{\Muboxanon{i}\psi}{}{\psi}$ is replaced by a sequence of 
      $2i$ edges via new states, and $\propStop$ must hold after $i$ and after $2i$ steps. The trick to use here is to cycle the first $k$ pebbles
      until the $i$-th one becomes the first, then execute the corresponding action for the $i$-pebble on the first one instead, and then cycle them
      back again. Let $\kappa^{\leftarrow}_{\mathsf{cyc}}$ be as above and 
      $\kappa^{\rightarrow}_{\mathsf{cyc}} := \repl{2 \leftarrow 1,3 \leftarrow 2,\ldots, 1 \leftarrow k}$. Then we can replace the clause for
      $\propDiam{i}$ in $\Phi^{k+1}_m$ by
      \begin{displaymath}
        \ldots \wedge \propDiamNo \to (\propStop \wedge \Muboxanon{k+1}(\Muboxanon{1}\mathit{goBack} \vee \kappa^{\leftarrow}_{\mathsf{cyc}}(\propStop \wedge 
         \Muboxanon{k+1}(\Muboxanon{1}\mathit{goBack} \vee \ldots (\propStop \wedge \Muboxanon{k+1}\Muboxanon{1}\mathit{goBack})\ldots)))
      \end{displaymath}
      with exactly $k-1$ occurrences of $\kappa^{\leftarrow}_{\mathsf{cyc}}$ and
      \begin{displaymath}
        \mathit{goBack} := \Muboxanon{k+1}((\propStop \wedge X_1) \vee \kappa_{\mathsf{cyc}}^{\rightarrow}\Muboxanon{k+1}((\propStop \wedge X_1) \vee
         \ldots \Muboxanon{k+1}(\propStop \wedge X_1)\ldots))
      \end{displaymath}
      with exactly $k-1$ occurrences of $\kappa_{\mathsf{cyc}}^{\rightarrow}$.

      Likewise, we can use the same trick to eliminate the dependence on $i$ of the formula $\Phi^{k+1}_m$ in the clause for $\propBox{i}$ which is
      equally replaced by $\propBoxNo$, and those paths of length $2i$ can be used to decode the value $i$ from the graph structure instead of 
      reading it straight off the atomic proposition.

\item Finally, we can use the same trick in a slightly more elaborate fashion to handle replacement modailities of the form $\repl{i \leftarrow j}$
      and $\repl{i \leftrightarrow j}$. We mark nodes in $\lts_\varphi$ that correspond to the form by $\propReplNo$ and those that correspond to 
      the latter by $\propSwapNo$. A swap of the form $\repl{i \leftrightarrow j}$ can be handled as follows: assume $i < j$. 
      \begin{enumerate}
       \item Cyclically shift the pebbles $1,\ldots,k$ for $i$ positions to the left.
       \item Cyclically shift the pebbles $2,\ldots,k$ for $j-i-1$ positions to the left.
       \item Swap pebbles 1 and 2.
       \item Cyclically shift the pebbles $2,\ldots,k$ for $j-i-1$ positions to the right.
       \item Cyclically shift the pebbles $1,\ldots,k$ for $i$ positions to the right.
      \end{enumerate}
      Hence, we replace a transition of the form $\Transition{\repl{i \leftarrow j}\psi}{}{\psi}$ by a path of length $2j-2$ and mark the states
      at positions $i$, $j-1$, $2j-2$ and the last one with $\propStop$ so that we can, like above, construct a formula that mimicks the five steps
      above to carry out the swapping of pebbles $i$ and $j$.

      The construction for replacements of the form $\repl{i \leftarrow j}$ is similar. Again, the trick is to cycle $i$ and $j$ to positions 
      1 and 2, carry the replacement out on these fixed positions, and cycle the pebbles back again. These eliminates the dependence of $\Phi^{k+1}_m$
      on propositions which carry such a value.
\end{enumerate}
With this being done, $\Phi^{k+1}_m$ becomes a formula that is defined over a fixed set $\Prop_1$ of atomic propositions of size 10, and we can
use it to simulate formulas $\varphi \in \Sigma_m^k$ over the same $\Prop_1$. Then the inexpressibility result of the previous section becomes 
meaningful. Using standard encoding techniques we can break the resul down to \mucalc{\omega} over two atomic propositions only, using
binary encoding, or a single one, using unary encoding. The atomic propositions can also be eliminated entirely by appending certain finite
trees to the states in which they hold such that these trees are checkable using fixpoint-free formulas of \mucalc{\omega}. Hence, we get the
following.

\begin{theorem}
\label{thm:hierarchy}
For all $m \ge 0$ and $k \ge 1$ we have $\Sigma_m^{k} \not\supseteq \Pi_m^{k+1}$ and $\Pi_m^k \not\supseteq \Sigma_m^{k+1}$. Consequently, we have
$\Sigma^k_m \subsetneq \Sigma_{m+1}^{k+1}$ and $\Pi_m^k \subsetneq \Pi_{m+1}^{k+1}$. These results hold independently of the underlying signature
$\Prop$ and $\Act$.
\end{theorem}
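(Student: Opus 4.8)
The plan is to assemble the theorem from the ingredients already established in Section~\ref{sec:construction} and then to upgrade them so that they hold over one fixed signature. First I would record the core diagonalisation: combining Lemma~\ref{lem:inPi} (which places $\Phi_m^{k+1}$ in $\Pi_m^{k+1}$), Lemma~\ref{lem:correct} (the self-negation property $\lts_\varphi,\varphi \models \Phi_m^{k+1}$ iff $\lts_\varphi,\varphi \not\models \varphi$), and the non-expressibility lemma that follows it, I obtain that no $\varphi \in \Sigma_m^k$ is equivalent to $\Phi_m^{k+1}$; since $\Phi_m^{k+1} \in \Pi_m^{k+1}$, this is exactly $\Sigma_m^k \not\supseteq \Pi_m^{k+1}$. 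Running the dual construction, with the fixpoint block started by $\mu$ rather than $\nu$ and simulating formulas $\varphi \in \Pi_m^k$, yields a witness in $\Sigma_m^{k+1}$ that no $\Pi_m^k$-formula can match, hence $\Pi_m^k \not\supseteq \Sigma_m^{k+1}$.

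The strict inclusions then follow formally. Both indices are monotone: a formula of arity $k$ may be read as one of arity $k+1$ that leaves the extra component unconstrained, and an alternation type of length $\le m$ is also one of length $\le m+1$, so $\Sigma_m^k \subseteq \Sigma_{m+1}^{k+1}$ and $\Pi_m^k \subseteq \Pi_{m+1}^{k+1}$. For strictness I use the witnesses: since $\Pi_m^{k+1} \subseteq \Sigma_{m+1}^{k+1}$ in the standard alternation hierarchy, we have $\Phi_m^{k+1} \in \Sigma_{m+1}^{k+1}$ while $\Phi_m^{k+1} \notin \Sigma_m^k$, giving $\Sigma_m^k \subsetneq \Sigma_{m+1}^{k+1}$; dually the $\Sigma_m^{k+1}$-witness lies in $\Pi_{m+1}^{k+1} \setminus \Pi_m^k$, giving $\Pi_m^k \subsetneq \Pi_{m+1}^{k+1}$.

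The remaining, genuinely new, task is signature independence. As noted in Remark~\ref{rem:noformula} and at the start of Section~\ref{sec:fixsig}, the construction of Section~\ref{sec:construction} simulates a formula over $\Prop$ by one over $\Prop \cup \Prop_0$, so without further work the hierarchy only separates fragments over strictly growing signatures. To repair this I apply the refinement of Section~\ref{sec:fixsig}: after replacing $\varphi$ by an equivalent normalised formula, so that every replacement is a single $\repl{i \leftarrow j}$ or $\repl{i \leftrightarrow j}$, I re-encode the index $i$ of a modality, the index $j$ of a proposition, and the alternation level $\ad{\varphi}{X}$ not in the labels but in the graph of $\lts_\varphi$, as finite paths whose lengths carry those numbers and whose relevant nodes bear the marker $\propStop$. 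The cyclic-shift replacements $\kappa^{\leftarrow}_{\mathsf{cyc}}$ and $\kappa^{\rightarrow}_{\mathsf{cyc}}$ rotate pebble $i$ into position $1$, let the simulating formula perform the operation on position $1$ alone, and rotate back; this is what removes the position index from the labels. The label set then shrinks to the fixed $\Prop_1$ of size $10$, and standard binary, respectively unary, encodings reduce this to two, respectively one, propositions, while appending fixed finite trees recognisable by fixpoint-free \mucalc{\omega}-formulas removes atomic propositions altogether. Throughout, $\Act$ stays the singleton already fixed in Section~\ref{sec:construction}.

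The hard part will be verifying that this re-encoding preserves everything the proof needs. One must re-prove the analogue of Lemma~\ref{lem:correct} for the modified $\lts_\varphi$ and $\Phi_m^{k+1}$: the gadgets $\mathit{searchPeb}$, $\mathit{searchProp}_j$ and the rotation sequences have to reproduce, along the newly inserted paths, exactly the moves the original clauses performed and restore the pebble configuration afterwards, so that the play-by-play correspondence between the two model-checking games is unaffected. Equally important are the two invariants that make the result nontrivial: the rotations touch only positions $1,\ldots,k$, with pebble $k+1$ moved only by $\Mudiamanon{k+1}$/$\Muboxanon{k+1}$ along the encoding paths, so the arity stays exactly $k+1$; and none of the new gadgets introduces additional fixpoint variables or alternation, so $\Phi_m^{k+1}$ remains in $\Pi_m^{k+1}$, dually $\Sigma_m^{k+1}$. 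Once these checks go through over $\Prop_1$, the conclusions of the first two paragraphs hold verbatim over any fixed finite, indeed empty, proposition set and singleton action set, which is the signature-independence claim.
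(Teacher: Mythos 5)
Your proposal is correct and follows essentially the same route as the paper: the paper itself offers no separate proof of Theorem~\ref{thm:hierarchy}, but assembles it exactly as you do, from Lemmas~\ref{lem:inPi} and~\ref{lem:correct} and the non-expressibility lemma, the dual construction for $\Pi_m^k$, the monotone inclusions $\Sigma_m^k \subseteq \Sigma_{m+1}^{k+1}$ with $\Phi_m^{k+1} \in \Pi_m^{k+1} \subseteq \Sigma_{m+1}^{k+1}$ as witness, and the Section~\ref{sec:fixsig} re-encoding over the fixed signature $\Prop_1$ (and its further reduction to two, one, or no propositions). Your closing paragraph's verification obligations for the modified construction are likewise exactly what the paper addresses, albeit only in sketch form, in Section~\ref{sec:fixsig}.
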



\section{Conclusion and Further Work}
\label{sec:concl}

We have shown that the arity hierarchy in the polyadic $\mu$-calculus, a modal fixpoint logic for specifying
bisimulation-invariant relational properties of states in transition systems, is strict in the sense that
higher arity gives higher expressive power provided that one is allowed to use a little bit more fixpoint
alternation ($\Sigma_m^k \subsetneq \Sigma_{m+1}^{k+1}$). If alternation must not increase then higher arity
yields not necessarily more but different expressiveness ($\Sigma^k_m \not\supseteq \Pi_m^{k+1}$). 

Obviously, the exact effects on expressive power that should be attributed to arity and to fixpoint alternation
need to be separated. A first step would be to prove the strictness of the alternation hierarchy within each
$\mucalc{k}$. For $k=1$, i.e.\ the ordinary $\mu$-calculus, this is known for arbitrary and in particular
for finite transition systems \cite{CONCUR::Bradfield1996,ICALP::Lenzi1996}. Subsequently, the result could be shown for several
other classes of transition systems, for instance binary trees \cite{RAIRO:arnold99,bradfield:rairo99}, nested words \cite{GutierrezKL14}
and graphs whose edge relation satisfies certain properties like being transitive for instance \cite{DAgostino_Lenzi2010:transitive,Alberucci_Facchini2009:transitive_reflexive}.

We suspect that not only is the alternation hierarchy within each \mucalc{k} also strict, but equally that Arnold's
proof \cite{RAIRO:arnold99} using a similar diagonalisation argument for \mucalc{} can be extended. It relies on 
the interreducibility between model checking for \mucalc{} and parity games \cite{Stirling95} and in particular the
existence of the Walukiewicz formulas defining winning regions in parity games \cite{journals/tcs/Walukiewicz02}. It is
known \cite{LL-FICS12} that the model checking problem for \mucalc{k} and any $k \ge 1$ can equally be reduced to a parity
game, and it seems feasible to extend the construction of the Walukiewicz formulas to higher arity. This would use
similar principles as those underlying the construction of $\Phi_m^{k+1}$ in Section~\ref{sec:hierarchy}.

Model checking \mucalc{k} can also be reduced to model checking \mucalc{} directly using $k$-products of transition
systems, i.e.\ there is a translation of \mucalc{k}-formulas to \mucalc{}-formulas that preserves truth under
taking $k$-fold products of transition systems \cite{Otto/99b,LL-FICS12}. Hence, the question of the strictness
of the alternation hierarchy in \mucalc{k} is equivalent to the question after the strictness of the \mucalc{} alternation
hierarchy over the class of all $k$-fold products of transition systems.



\end{document}